\documentclass[12pt]{article}
\usepackage{amsmath,amssymb,amsthm,bbm,natbib,color,hyperref}
\usepackage[margin=1.15in]{geometry}
\usepackage{caption}
\captionsetup[figure]{font=small}
\usepackage{graphicx}
\usepackage{subcaption}
\usepackage{afterpage}
\usepackage{microtype}
\usepackage[hang,flushmargin]{footmisc}

\newtheorem{theorem}{Theorem}

\newtheorem{example}{Example}

\newtheorem{corollary}{Corollary}

\newcommand{\R}{\mathbb{R}}

\newcommand{\EE}[1]{\mathbb{E}\left[{#1}\right]}

\newcommand{\PP}[1]{\mathbb{P}\left\{{#1}\right\}}

\newcommand{\Ppst}[3]{\mathbb{P}_{{#1}}\left\{{#2}\  \middle| \ {#3}\right\}}
\newcommand{\Pp}[2]{\mathbb{P}_{{#1}}\left\{{#2}\right\}}
\newcommand{\eqd}{\stackrel{\textnormal{d}}{=}}
\newcommand{\One}[1]{{\mathbbm{1}}\left\{{#1}\right\}}

\newcommand{\iidsim}{\stackrel{\textnormal{iid}}{\sim}}

  \newcommand\independent{\protect\mathpalette{\protect\independenT}{\perp}}
\def\independenT#1#2{\mathrel{\rlap{$#1#2$}\mkern2mu{#1#2}}}

\newcommand{\Zcal}{\mathcal{Z}}
\newcommand{\Xcal}{\mathcal{X}}
\newcommand{\Scal}{\mathcal{S}}

\usepackage{tikz}
\usetikzlibrary{decorations.pathreplacing,arrows}

\title{Permutation tests using\\ arbitrary permutation distributions}
\author{
 Aaditya Ramdas\thanks{Departments of Statistics and Machine Learning, Carnegie Mellon University} , Rina Foygel Barber\thanks{Department of Statistics, University of Chicago} ,\\ Emmanuel J.~Cand{\`e}s\thanks{Departments of Statistics and Mathematics, Stanford University} ,
Ryan J.~Tibshirani\thanks{Department of Statistics, University of California Berkeley}}

\begin{document}

\maketitle

\begin{abstract}
  Permutation tests date back nearly a century to Fisher's randomized experiments, 
  and remain an immensely popular statistical tool, used
  for testing hypotheses of independence between variables and other
  common inferential questions. 
  Much of the existing literature has emphasized that, for the permutation p-value to be valid, 
  one must first pick a subgroup $G$ of permutations (which could equal the full group)
  and then recalculate the test statistic on permuted data using either an exhaustive enumeration of $G$, 
  or a sample from $G$  drawn uniformly at random.
In this work, we demonstrate that 
  the focus on subgroups and uniform sampling are both unnecessary for validity---in fact,
 a simple random modification of the permutation p-value
  remains valid even when using an arbitrary distribution (not necessarily uniform) over any subset of permutations
(not necessarily a subgroup). We provide a unified theoretical treatment of such generalized permutation tests, recovering all known results from the literature as special cases. 
  Thus, this work expands the flexibility of the permutation test toolkit available to the practitioner.
\end{abstract}

\section{Introduction}
Suppose we observe data $X_1,\dots,X_n\in\Xcal$, and would like to test the null hypothesis
\begin{equation}\label{eq:null-exch}
H_0: \
\textnormal{$X_1,\dots,X_n$ are exchangeable}.
\end{equation}
(Note that the hypothesis that the $X_i$'s are i.i.d., is a special
case of this null.) We assume that we have a pre-specified test
statistic, which is a function $T: \Xcal^n \rightarrow \R$, where,
without loss of generality, we let larger values of
$T(X) = T(X_1,\dots,X_n)$ indicate evidence in favor of an alternative
hypothesis.

Since the null distribution of the $X_i$'s is not specified exactly,
we usually do not know the null distribution of
$T(X)$. The {\em permutation test} avoids this difficulty
by comparing $T(X)$ against the same function applied to permutations
of the data. To elaborate, let $\Scal_n$ denote the set of all
permutations on $[n]:=\{1,\dots,n\}$, and define
\begin{equation}
x_\sigma := (x_{\sigma(1)},\dots,x_{\sigma(n)})\nonumber
\end{equation}
for any $x\in\Xcal^n$ and any $\sigma\in \Scal_n$.
Then, we can compute a p-value
\begin{equation}\label{eqn:pvalue_Sn}
P = \frac{\sum_{\sigma \in \Scal_n} \One{T(X_\sigma) \geq T(X)}}{n!},
\end{equation}
which ranks $T(X)$ amongst $\{T(X_\sigma)\}_{\sigma \in \Scal_n}$ sorted in decreasing order.
 Then, under the null hypothesis $H_0$, $P$ is a valid p-value, meaning $\Pp{H_0}{P\leq \alpha}\leq \alpha$ for all
$\alpha\in[0,1]$.\footnote{Note
 that we always have $P>0$, because of the identity permutation 
 $\sigma = \mathrm{Id} \in\Scal_n$ (for which $X_\sigma = X$ and thus $\One{T(X_\sigma)\geq T(X)} = 1$). 
Other permutation p-values in this paper, like~\eqref{eq:p-random}, may explicitly include a $``1+"$ term in the numerator and denominator, but their similarity to the above formula can be intuitively justified by  thinking of the extra $``1+"$ as resulting from the identity permutation.}

As an example, suppose that the observed data set actually consists of
pairs $(X_i,Y_i)$, which are assumed to be i.i.d.~from some joint
distribution. If we are interested in testing
$H'_0: X\independent Y$, we can reframe this question as testing whether
$X_1,\dots,X_n$ are i.i.d.~conditional on $Y_1,\dots,Y_n$---in particular, under $H_0'$,
it holds that $X$ follows an exchangeable distribution {\em conditional} on $Y$.  Our test
statistic $T$ might be chosen as
\[T(X) = \left|\textnormal{Corr}\big((X_1,\dots,X_n),(Y_1,\dots,Y_n)\big)\right|.
\]
In order to see whether the observed correlation is sufficiently large to be statistically significant, we would compare $T(X,Y)$ to
the correlations computed on permuted data, 
\[T(X_\sigma)=
\left|\textnormal{Corr}\big((X_{\sigma(1)},\dots,X_{\sigma(n)}),(Y_1,\dots,Y_n)\big)\right|.
\]
The resulting p-value computed as in~\eqref{eqn:pvalue_Sn} is then a valid p-value under the null hypothesis $H_0'$.
In addition to testing independence, permutation tests are also
commonly used for testing other hypotheses, such as whether two samples
follow the same distribution.\footnote{Permutation tests are a special case of
``invariance-based testing''
\citep[Chapter 6]{lehmann2005testing}.}

The p-value $P$ computed in~\eqref{eqn:pvalue_Sn} requires computing
$T(X_\sigma)$ for every $\sigma\in\Scal_n$. One may naturally be
interested in reducing the computational cost of this procedure, since
computing $T(X_\sigma)$ for $|\Scal_n|=n!$ many permutation may be
computationally prohibitive for even moderately large $n$. As is well
known, we can obtain valid p-values by \textit{uniformly} randomly sampling permutations
from $\Scal_n$ and computing 
\begin{equation}\label{eq:p-random}
P = \frac{1 + \sum_{m=1}^M \One{T(X_{\sigma_m})\geq T(X)}}{1+M}, 
\end{equation}
in which $\sigma_1, \ldots, \sigma_M$ are i.i.d.~uniform draws from $\Scal_n$.

In a different direction, one can also reduce the set of permutations
$\sigma$ to subsets of $\Scal_n$. Specifically, let
$G\subseteq \Scal_n$ be any subset, and define
\begin{equation}\label{eqn:pvalue_subgroup}P = \frac{\sum_{\sigma \in G} \One{T(X_\sigma)\geq T(X)}}{|G|},\end{equation}
where $|G|$ is the cardinality of $G$. 

Clearly, if $G$ does not contain the identity permutation, then $P$ cannot be a p-value because it could
potentially take on the value zero. However, including the identity permutation is not sufficient.
The literature repeatedly emphasizes that $P$ defined
in~\eqref{eqn:pvalue_subgroup} is a valid p-value only if the subset 
$G\subseteq\Scal_n$ is in fact a \emph{subgroup}\footnote{For
  completeness, a group is a set paired with an operation that takes any two  
  elements of the set and produces a third, such that the operation is
  associative, an identity element exists, and every element has an inverse. A
  subgroup is just a subset of the original group that maintains the same
  properties---in particular, any subgroup must contain the identity
    element. The group $\Scal_n$ is called the symmetric group; its elements
  are the $n!$ permutations over $n$ objects. The operation is denoted $\circ$,
  sometimes called ``composition''; its action is to compose any two
  permutations $\sigma,\sigma'$ to yield a third one $\nu := \sigma \circ
  \sigma'$ which is given by $\nu(i)=\sigma(\sigma'(i))$ for $i \in [n]$. The
  inverse of $\sigma$, denoted $\sigma^{-1}$, is defined by setting
  $\sigma^{-1}(i) = j$ if $\sigma(j)=i$, so that $\sigma \circ \sigma^{-1}$
  always equals the identity permutation introduced earlier. Note that $\Scal_n$
  is not an Abelian group, meaning that $\circ$ is not commutative, since usually, $\sigma
  \circ \sigma' \neq \sigma' \circ \sigma$.
  For a subset $G\subseteq\Scal_n$, we can verify that $G$ is a
  subgroup if it is closed under composition (i.e., $\sigma\circ\sigma'\in G$
  for any $\sigma,\sigma'\in G$).}
\citep[Theorem 1]{hemerik2018exact}.

The subgroup $G$ mentioned above may be chosen
strategically to balance between computational efficiency and the
power of the test (see, e.g.,
\citet{hemerik2018exact,koning2022faster}).
In case $G$ has a large cardinality, the aforementioned references show
that sampling permutations \emph{uniformly at random} from $G$ also
yields valid p-values---that is, the randomized p-value $P$ from~\eqref{eq:p-random}
is valid if $\sigma_1,\dots,\sigma_M$ are i.i.d.~samples drawn uniformly from $G$
rather than from $\Scal_n$.  Again, choosing $G$ to be a subgroup (rather than an arbitrary subset),
 and sampling uniformly rather than from an arbitrary distribution, 
are both important for the validity of the resulting p-value.

\subsection{Contributions}

 The background above naturally leads to the following question:
while it is indeed correct that $P$ from \eqref{eqn:pvalue_subgroup} is 
not a p-value for general subsets $G \subseteq \Scal_n$,
is it possible to slightly modify the definition of $P$ so that it retains its validity
for subsets $G$ that are not subgroups? Further, while sampling the $\sigma_m$'s nonuniformly 
from $G$ would destroy the validity of $P$ from {\eqref{eq:p-random}, can we modify the definition
of $P$ so that nonuniform sampling from a set is allowed?
The first question is addressed by  \citet{hemerik2018exact}, as we will describe below; to our knowledge, the second question  
has not been addressed in the literature.

In this paper, we will broaden the applicability of
permutation tests and present generalizations, which yield valid
p-values even when we sample permutations---with or without
replacement---from a non-uniform distribution over all permutations or
from an arbitrary subset of permutations. In doing so, we shall
carefully explain how this generalization relates and extends all
previous options. 
This generalization yields new and more flexible permutation test methods; we leave a detailed study of
pros and cons of these generalizations (such as how they tradeoff the two types of errors) to future work.

\section{A generalized permutation test}\label{sec:main}

\subsection{Testing with an arbitrary distribution}
\label{sec:testing}

We now present our first generalization of the permutation test.  It allows us to use any (not necessarily uniform) distribution over $\Scal_n$ in order to construct our permutation p-value.
\begin{theorem}\label{thm:distrib_on_Sn}
  Let $q$ be any distribution over $\sigma\in\Scal_n$.  Let
  $\sigma_0\sim q$ be a random draw, and define
\begin{equation}\label{eqn:pvalue_general_distribution}
P = \sum_{\sigma\in\Scal_n} q(\sigma) \cdot\One{T(X_{\sigma\circ\sigma_0^{-1}})\geq T(X)}.\end{equation}
Then $P$ is a valid p-value, 
 i.e., 
$\Pp{H_0}{P\leq \alpha}\leq \alpha$ for all $\alpha\in[0,1]$.
\end{theorem}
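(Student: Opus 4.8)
The plan is to mirror the three-part structure of the proof of Theorem~\ref{thm:randomperm}, but with the uniform weights $1/|S|$ replaced by the general weights $q(\sigma)$, and with the deterministic counting bound~\eqref{eqn:strange_subset} replaced by a weighted analogue. The organizing object is the pivot quantity
\[
P_{\sigma'} := \sum_{\sigma\in\Scal_n} q(\sigma)\cdot\One{T(X_\sigma)\geq T(X_{\sigma'})},
\]
defined for each $\sigma'\in\Scal_n$. The goal is to rewrite $\PP{P\leq\alpha}$ as a $q$-average of the probabilities $\PP{P_{\sigma'}\leq\alpha}$, and then to control the $q$-weighted fraction of reference permutations $\sigma'$ for which $P_{\sigma'}\leq\alpha$.

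First I would carry out the exchangeability and conditioning steps, which are routine adaptations of~\eqref{eqn:keystep_subset}. For any fixed $\sigma'\in\Scal_n$, because $X\eqd X_{\sigma'}$ under $H_0$ and $(X_{\sigma'})_{\sigma\circ\sigma'{}^{-1}}=X_\sigma$, we have
\[
\PP{\sum_{\sigma\in\Scal_n} q(\sigma)\cdot\One{T(X_{\sigma\circ\sigma'{}^{-1}})\geq T(X)}\leq\alpha}=\PP{P_{\sigma'}\leq\alpha}.
\]
Then, conditioning on the value of $\sigma_0\sim q$ (which is independent of $X$) and summing over its possible values, I obtain
\[
\PP{P\leq\alpha}=\sum_{\sigma'\in\Scal_n} q(\sigma')\,\PP{P_{\sigma'}\leq\alpha}=\EE{\sum_{\sigma'\in\Scal_n} q(\sigma')\,\One{P_{\sigma'}\leq\alpha}},
\]
where the last equality exchanges the finite sum with the expectation.

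The hard part will be the deterministic inequality $\sum_{\sigma'\in\Scal_n} q(\sigma')\,\One{P_{\sigma'}\leq\alpha}\leq\alpha$, which must hold for every realization of $X$; this is where the subset argument behind~\eqref{eqn:strange_subset} has to be upgraded to handle nonuniform weights. To prove it, I would introduce the weighted upper-tail function $W(t):=\sum_{\sigma\in\Scal_n} q(\sigma)\,\One{T(X_\sigma)\geq t}$, so that $P_{\sigma'}=W(T(X_{\sigma'}))$ and $W$ is non-increasing in $t$. Let $A:=\{\sigma'\in\Scal_n:P_{\sigma'}\leq\alpha\}$; if $A=\varnothing$ the bound is immediate, so assume $A\neq\varnothing$ and let $v^\star:=\min_{\sigma'\in A}T(X_{\sigma'})$, which is attained because $\Scal_n$ is finite. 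By monotonicity of $W$, a permutation $\sigma'$ lies in $A$ if and only if $T(X_{\sigma'})\geq v^\star$: the forward direction is the definition of $v^\star$, and the reverse follows since $T(X_{\sigma'})\geq v^\star$ forces $W(T(X_{\sigma'}))\leq W(v^\star)\leq\alpha$. Consequently
\[
\sum_{\sigma'\in\Scal_n} q(\sigma')\,\One{P_{\sigma'}\leq\alpha}=\sum_{\sigma':\,T(X_{\sigma'})\geq v^\star} q(\sigma')=W(v^\star)\leq\alpha.
\]
Plugging this pointwise bound into the expectation above yields $\PP{P\leq\alpha}\leq\EE{\alpha}=\alpha$, completing the argument. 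I expect this threshold-based step to be cleaner than the floor-based counting in~\eqref{eqn:strange_subset}, which it recovers as the special case where $q$ is uniform on a subset $S$.
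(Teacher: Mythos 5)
Your proposal is correct, and its probabilistic skeleton is exactly the paper's: the exchangeability identity $X\eqd X_{\sigma'}$ under $H_0$ (with $(X_{\sigma'})_{\sigma\circ\sigma'{}^{-1}}=X_\sigma$), followed by conditioning on $\sigma_0\sim q$ to write $\PP{P\leq\alpha}$ as the $q$-average $\sum_{\sigma'\in\Scal_n} q(\sigma')\,\PP{P_{\sigma'}\leq\alpha}$, matches the paper's final display step for step. The one genuine difference is the deterministic inequality $\sum_{\sigma'} q(\sigma')\One{P_{\sigma'}\leq\alpha}\leq\alpha$: the paper invokes it as a black box, citing \citet{harrison2012conservative} for the statement $\sum_{k} w_k \One{\sum_{i} w_i \One{t_i\geq t_k}\leq\alpha}\leq\alpha$, whereas you prove it directly via the non-increasing tail function $W(t)$ and the minimal attained threshold $v^\star$. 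Your argument is sound: the set equality $\{\sigma':P_{\sigma'}\leq\alpha\}=\{\sigma':T(X_{\sigma'})\geq v^\star\}$ holds in both directions (ties are handled automatically, since any $\sigma'$ with $T(X_{\sigma'})=v^\star$ has $P_{\sigma'}=W(v^\star)\leq\alpha$), and the total $q$-mass of this set is then exactly $W(v^\star)\leq\alpha$, with the empty case trivial. The trade-off is modest: the paper's route is shorter and inherits Harrison's slightly greater generality (arbitrary weights $w_i\in[0,\infty]$ not summing to one, $t_i\in[-\infty,\infty]$), while yours makes the proof self-contained and exposes \emph{why} the inequality holds---it is the natural weighted upgrade of the order-statistics counting behind~\eqref{eqn:strange_subset}, which your bound indeed recovers, since with uniform weights $1/|S|$ it caps the number of qualifying $\sigma'$ at $\alpha|S|$, hence at $\lfloor\alpha|S|\rfloor$ by integrality.
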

In this theorem, validity is retained when conditioning on the order
statistics of the data, meaning that
$\Ppst{H_0}{P\leq \alpha}{X_{(1)},\dots,X_{(n)}}\leq \alpha$, where
$X_{(1)}\leq \dots \leq X_{(n)}$ are the order statistics of
$X=(X_1,\dots,X_n)$.\footnote{The notation of the order statistics implicitly
 assumes $\Xcal=\R$. More generally, for an arbitrary space $\Xcal$, the validity of $P$ is retained
 when conditioning on the unordered observed data, i.e., the multiset $\{X_1,\dots,X_n\}$.} The reason that this holds is simply because
$H_0$ remains true even conditional on the order statistics---that is,
if $X$ is exchangeable, then $X\mid (X_{(1)},\dots,X_{(n)})$ is again
exchangeable. The same conditional validity holds for all results to
follow, as well.  However, one cannot condition on $\sigma_0$;
the result only holds marginally over $\sigma_0$, and 
this external randomization is key to retaining validity.

We defer the proof to Section~\ref{sec:proof1}---we will first discuss
connections to the existing literature  in order to provide more context and intuition for the above theorem.

\paragraph{Uniform distribution over a subgroup.} To begin with, assume $q$
is the uniform distribution over a fixed subgroup $G$ of
$\Scal_n$. Then in this case, the p-value in
\eqref{eqn:pvalue_general_distribution} takes the special form
\[
P = \sum_{\sigma\in G} \frac{1}{|G|}
\cdot\One{T(X_{\sigma\circ\sigma_0^{-1}})\geq T(X)} = \sum_{\sigma\in G} \frac{1}{|G|}
\cdot\One{T(X_{\sigma})\geq T(X)},  
\]
where the second equality holds because a subgroup $G$ is closed under
inverses and composition, so
$\{\sigma\circ \sigma_0^{-1} : \sigma\in G\} = G$ for any
$\sigma_0\in G$. This simple observation recovers a well-known fact we
discussed earlier; namely, one can restrict the set of permutations to
an arbitrary subgroup, and the p-value $P$ defined in Theorem~\ref{thm:distrib_on_Sn} will then coincide with our earlier definition~\eqref{eqn:pvalue_subgroup}
 (proved to be a valid p-value in \citep[Theorem 1]{hemerik2018exact}).

\paragraph{Uniform distribution over a subset.}
Consider now a uniform distribution over an arbitrary subset $S$ that
is not a subgroup. In this case, the definition of $P$ in Theorem~\ref{thm:distrib_on_Sn} is equal to 
\begin{equation}\label{eqn:pvalue_subset_corrected}
P = \frac{\sum_{\sigma \in S} \One{T(X_{\sigma\circ\sigma_0^{-1}})\geq
  T(X)}}{|S|},
\end{equation}
as proposed earlier by \citet{hemerik2018exact}. 
This is, in general, not the same as
\begin{equation}\label{eqn:pvalue_subset}
P' = \frac{\sum_{\sigma \in S} \One{T(X_\sigma)\geq T(X)}}{|S|}
\end{equation}
(which is equivalent to the quantity defined in~\eqref{eqn:pvalue_subgroup} earlier,
with the subset $S$ in place of a subgroup $G$). As we shall see below, $P'$ is generally not a p-value, a fact which can cause large issues in
practice, as has been frequently emphasized.  For example, consider
the tool of {\em balanced permutations}---in the setting of testing
whether a randomly assigned treatment has a zero or nonzero effect,
this method has been proposed as a variant of the permutation test in
this setting, where the subset $S$ consists of all permutations such
that the permuted treatment group contains exactly half of the
original treatment group, and half of the original control group.
\citet{southworth2009properties} show that the quantity $P'$ computed
in~\eqref{eqn:pvalue_subset} for this choice of subset $S$ can be
substantially anti-conservative, i.e., $\PP{P\leq \alpha}> \alpha$,
particularly for low significance levels $\alpha$.  (See also
\citet{hemerik2018exact} for additional discussion of this issue.)

A simple example may help to illustrate this point, and to give intuition for the role of the random permutation $\sigma_0$.
\begin{example}\label{example1}
  Let $n=4$, and consider the set
\[
S = \{\mathrm{Id} , \sigma_{1\leftrightarrow 3, 2\leftrightarrow 4}, \sigma_{1\leftrightarrow 4,  2\leftrightarrow 3}\},
\]
where, e.g., $\sigma_{1\leftrightarrow 3, 2\leftrightarrow 4}$ is the
permutation swapping entries $1$ and $3$ and also swapping $2$ and
$4$.  Let \smash{$X_1,X_2,X_3,X_4\iidsim \mathcal{N}(0,1)$} be standard normal
random variables (so that the null hypothesis of exchangeability, $H_0$, is satisfied), and set $T(X) = X_1 + X_2$. Then the quantity $P'$
defined in~\eqref{eqn:pvalue_subset} is equal to
\[P' = \frac{\One{T(X_{\mathrm{Id}})\geq T(X)} +
  \One{T(X_{\sigma_{1\leftrightarrow 3, 2\leftrightarrow 4}})\geq
    T(X)} + \One{T(X_{\sigma_{1\leftrightarrow 4, 2\leftrightarrow
        3}})\geq T(X)}}{3}.
\]
This gives
\[
P' = \begin{cases}\frac{1+0+0}{3} = \frac{1}{3}, & \textnormal{ if }X_3
  + X_4 < X_1 + X_2 , \\ \frac{1+1+1}{3}= 1, & \textnormal{
    otherwise}\end{cases}
\]
and, therefore, 
\[P' = \begin{cases} \frac{1}{3},&\textnormal{ with probability
    $\frac{1}{2}$},\\ 1,&\textnormal{ with probability
    $\frac{1}{2}$.}\end{cases}\]
We can thus see that $P'$ is anti-conservative at the threshold
$\alpha = \frac{1}{3}$.

Next, we will see how the correction~\eqref{eqn:pvalue_subset_corrected} fixes
the failure described above. Denote by $P_{\sigma}$ the p-value
in~\eqref{eqn:pvalue_subset_corrected} calculated conditional on the random
$\sigma_0$ being equal to $\sigma$, so that 
\begin{equation}
P = \begin{cases} 
P_{\mathrm{Id}}, & \textnormal{ w.p. } 1/3, \\
P_{\sigma_{1\leftrightarrow 4, 2\leftrightarrow 3}}, & \textnormal{ w.p. } 1/3, \\
P_{\sigma_{1\leftrightarrow 3, 2\leftrightarrow 4}}, & \textnormal{ w.p. } 1/3.
\end{cases}
\end{equation}
Then, the calculation that was previously performed effectively shows that
\[
P_{\mathrm{Id}} =  \begin{cases} \frac{1}{3}, & \textnormal{ if }X_3 + X_4 < X_1 + X_2 , \\  1, & \textnormal{ otherwise}.\end{cases}
\]
A similar straightforward calculation then yields
\[
P_{\sigma_{1\leftrightarrow 3, 2\leftrightarrow 4}} = P_{\sigma_{1\leftrightarrow 4, 2\leftrightarrow 3}} = \begin{cases} \frac{2}{3}, & \textnormal{ if }X_3 + X_4 < X_1 + X_2 , \\  1, & \textnormal{ otherwise}.\end{cases}
\]
Put together, we obtain 
\begin{equation}\label{eq:example-fixed-P}
P =  \begin{cases} 
\frac{1}{3}, & \textnormal{ w.p. } 1/6, \\
\frac{2}{3}, & \textnormal{ w.p. } 1/3, \\
1, & \textnormal{ w.p. } 1/2.
\end{cases}
\end{equation}
This is indeed stochastically larger than uniform, and is thus a valid p-value.
\end{example}

\paragraph{The role of $\sigma_0$.}
To better understand the role of the random permutation $\sigma_0$, let us consider Example~\ref{example1}
again, and look more closely at what goes wrong there.
We observe that $P'$ compares the observed statistic $T(X)$ against the  set $
\{T(X_\sigma)\}_{\sigma\in S} = \{T(X_{\textnormal{Id}}),T(X_{\sigma_{1\leftrightarrow 3, 2\leftrightarrow 4}}) , T(X_{\sigma_{1\leftrightarrow 4, 2\leftrightarrow        3}})\}$. 
For $P'$ to be a valid p-value, given the (unordered) set of potential data vectors  $\{X_{\textnormal{Id}},X_{\sigma_{1\leftrightarrow 3, 2\leftrightarrow 4}} , X_{\sigma_{1\leftrightarrow 4, 2\leftrightarrow        3}}\}$,
it suffices that the actual observed data $X$ is equally likely to be any one of these three. Now suppose this set is equal
to $\{(0.8, 0.5, 0.2, 1.0), (1.0, 0.2, 0.5, 0.8), (0.5, 0.8, 1.0, 0.2)\}$, in no particular order. 
Each of these three vectors have equal likelihood under $H_0$ (due to exchangeability). 
Counterintuitively, however, our knowledge of the subset of permutations $S$ implies that
we {\em must} have $X = (1.0, 0.2, 0.5, 0.8)$---otherwise we could not have obtained this particular set. For instance, if $X=(0.8, 0.5, 0.2, 1.0)$,
then we would have $X_{\sigma_{1\leftrightarrow 3, 2\leftrightarrow 4}} = (0.2,
1.0, 0.8, 0.5)$---but this does not lie in our set, so it cannot be the
correct value of $X$. In other words, if we condition on the unordered 
set $\{X_\sigma\}_{\sigma\in S}$, which is the {\em orbit} of the data $X$ under the actions of permutations $\sigma\in S$,
our intuition tells us that $X$ is equally likely to be any element of this orbit---but in fact, for a non-subgroup $S$, $X$ might be uniquely identified
from its orbit.

Now consider what happens if we compute the corrected p-value $P$~\eqref{eqn:pvalue_subset_corrected}  and
let us once more examine the question of identifying the data $X$ from its orbit. The  p-value $P$
compares the observed statistic $T(X)$ against the  set $\{T(X_{\sigma\circ\sigma_0^{-1}})\}_{\sigma\in S}$,
and so now the question is whether we can identify $X$ from the set $\{X_{\sigma\circ\sigma^{-1}_0}\}_{\sigma\in S}$, which is the 
orbit of $X_{\sigma_0^{-1}}$ for a randomly drawn $\sigma_0\in S$. 
Identifying $X$ is no longer possible because of the random $\sigma_0$. 
For instance, working again with Example~\ref{example1}, suppose this set $\{X_{\sigma\circ\sigma^{-1}_0}\}_{\sigma\in S}$ is equal
to $\{(0.8, 0.5, 0.2, 1.0), (1.0, 0.2, 0.5, 0.8), (0.5, 0.8, 1.0, 0.2)\}$, in no particular order. 
We can identify that this is the orbit of $x = (1.0, 0.2, 0.5, 0.8)$ under $S$---that is, this set is equal to $\{x_\sigma\}_{\sigma\in S}$.
Then the following three possibilities are equally likely:
\begin{itemize}
\item $\sigma_0 = \textnormal{Id}$ and so $X = x_{ \textnormal{Id}^{-1}} = x = (1.0, 0.2, 0.5, 0.8)$;
\item $\sigma_0 = \sigma_{1\leftrightarrow 3, 2\leftrightarrow 4}$ and so $X = x_{\sigma_{1\leftrightarrow 3, 2\leftrightarrow 4}^{-1}} = (0.5,0.8,1.0,0.2)$;
\item $\sigma_0 = \sigma_{1\leftrightarrow 4, 2\leftrightarrow 3}$ and so $X = x_{\sigma_{1\leftrightarrow 4, 2\leftrightarrow 3}^{-1}} = (0.8,0.5,0.2,1.0)$.
\end{itemize}
In other words, $X$ is now equally likely to be any of the three values in our set, and validity is restored.

\subsection{Random samples from an arbitrary distribution}

Our second generalization concerns permutations that are randomly
chosen from an arbitrary distribution.
\begin{theorem}\label{thm:distrib_on_Sn_sample}
Let $q$ be any distribution over $\sigma\in\Scal_n$. 
Let \smash{$\sigma_0,\sigma_1,\dots,\sigma_M\iidsim q$}, and define
\begin{equation}\label{eqn:pvalue_general_distribution_sample}
P = \frac{1 + \sum_{m=1}^M \One{T(X_{\sigma_m\circ\sigma_0^{-1}})  \geq T(X)} }{1+M}.\end{equation}
Then $P$ is a valid p-value, 
 i.e., 
$\Pp{H_0}{P\leq \alpha}\leq \alpha$ for all $\alpha\in[0,1]$.
\end{theorem}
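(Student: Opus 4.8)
The plan is to reduce the statement to the exchangeability of a single cleverly chosen collection of test-statistic values, and then to finish with the very same rank-counting argument already used in the proof of Theorem~\ref{thm:subgroup}. The essential observation is that the reference permutation $\sigma_0$ enters only through a right-multiplication by $\sigma_0^{-1}$, which can be absorbed into the data.

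First I would rewrite the comparison in a symmetric form. Using the same composition rule invoked in the earlier proofs (namely $(X_{\sigma'})_{\sigma\circ\sigma'^{-1}} = X_\sigma$), set $W := X_{\sigma_0^{-1}}$ and note that $X_{\sigma_m\circ\sigma_0^{-1}} = (X_{\sigma_0^{-1}})_{\sigma_m} = W_{\sigma_m}$ for every $m$. Writing $Z_m := T(W_{\sigma_m})$, we have $Z_0 = T(W_{\sigma_0}) = T(X)$, so that
\[ P = \frac{1}{1+M}\Big(1 + \sum_{m=1}^M \One{Z_m \geq Z_0}\Big) = \frac{1}{1+M}\sum_{m=0}^M \One{Z_m \geq Z_0}. \]

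The key step — and the one I expect to be the crux — is to show that under $H_0$ the transformed data $W = X_{\sigma_0^{-1}}$ is independent of the entire vector $(\sigma_0,\sigma_1,\dots,\sigma_M)$, which itself remains i.i.d.\ $q$. This is exactly where exchangeability of $X$ is used in full force: for each fixed $s\in\Scal_n$ we have $X_{s^{-1}}\eqd X$, so the conditional law of $W$ given $\sigma_0=s$ does not depend on $s$; combined with the fact that $(\sigma_1,\dots,\sigma_M)$ are independent of $(X,\sigma_0)$, this yields $W\independent(\sigma_0,\dots,\sigma_M)$ while $\sigma_0,\dots,\sigma_M$ stay i.i.d.\ $q$ conditionally on $W$. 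It is precisely here that drawing the reference $\sigma_0$ from the same distribution $q$ (and not conditioning on it) becomes essential, matching the earlier remark that the correction is valid only on average over $\sigma_0$.

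Given this independence, the remaining steps are routine. Conditionally on $W$, the variables $Z_0,\dots,Z_M$ are i.i.d.\ — each is the fixed map $\sigma\mapsto T(W_\sigma)$ applied to an i.i.d.\ draw $\sigma_m$ — hence $(Z_0,\dots,Z_M)$ is exchangeable. I would then close exactly as in~\eqref{eqn:keystep_subgroup}--\eqref{eqn:strange}: setting $P_j := \frac{1}{M+1}\sum_{m=0}^M \One{Z_m\geq Z_j}$, exchangeability gives $P = P_0 \eqd P_j$ for every $j$, so $\PP{P\leq\alpha} = \frac{1}{M+1}\EE{\sum_{j=0}^M \One{P_j\leq\alpha}}$, and the deterministic counting bound $\sum_{j=0}^M \One{P_j\leq\alpha}\leq \lfloor\alpha(M+1)\rfloor$ (the equal-weight case of the Harrison inequality already used in the proof of Theorem~\ref{thm:distrib_on_Sn}) delivers $\PP{P\leq\alpha}\leq\alpha$.
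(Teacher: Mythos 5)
Your proof is correct, and its core object is exactly the one underlying the paper's own proof: the paper establishes Theorem~\ref{thm:distrib_on_Sn_sample} by appealing to \citet{besag1989generalized}'s exchangeable-MCMC construction (Section~\ref{sec:mcmc}), whose hidden ``hub'' $X_* = X_{\sigma_0^{-1}}$ is precisely your $W$, and whose parallel forward draws $X_{\sigma_m\circ\sigma_0^{-1}} = W_{\sigma_m}$ are precisely the arguments of your $Z_m$. The difference is in the mechanics: where the paper invokes Besag--Clifford as a black box to conclude that $X, X_{\sigma_1\circ\sigma_0^{-1}},\dots,X_{\sigma_M\circ\sigma_0^{-1}}$ are exchangeable, you prove the needed distributional fact from scratch --- namely that under $H_0$ one has $W \independent (\sigma_0,\sigma_1,\dots,\sigma_M)$ (via exchangeability of $X$ and independence of the permutations from the data), so that $Z_0,\dots,Z_M$ are conditionally i.i.d.\ given $W$, which is in fact slightly stronger than exchangeability --- and you then close with the explicit rank-counting bound $\sum_{j=0}^M \One{P_j\le\alpha}\le\lfloor \alpha(M+1)\rfloor$ rather than quoting the standard fact that the rank p-value of exchangeable statistics is valid. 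Your route is more elementary and self-contained (no Markov chain, no stationarity or reversibility bookkeeping), and it makes transparent where the distribution of $\sigma_0$ and the averaging over it are used. What the paper's route buys in exchange is generality: the MCMC framing immediately yields the $s>1$ variants mentioned at the end of Section~\ref{sec:mcmc}, and the exchangeability perspective (Theorem~\ref{thm:exch}) covers permutations that are exchangeable but not i.i.d.\ (e.g., sampling without replacement), a case your conditional-i.i.d.\ argument does not handle as stated. Your argument also closely parallels the paper's alternative proof of Theorem~\ref{thm:exch}, specialized to the i.i.d.\ case.
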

\noindent 
This result is closely related to \citet{besag1989generalized}'s well
known construction for obtaining exchangeable samples from Markov
chain Monte Carlo (MCMC) sampling---the details are deferred to
Section~\ref{sec:mcmc}.

Just as before, some special cases of this result are well known to
statisticians.

\paragraph{Random permutations from $\Scal_n$.} In the simple case
where $q$ is the uniform distribution over $\Scal_n$, Theorem
\ref{thm:distrib_on_Sn_sample} states that
\begin{equation}
\label{eq:easy}
 P = \frac{1 + \sum_{m=1}^M \One{T(X_{\sigma_m\circ\sigma_0^{-1}})
     \geq T(X)} }{1+M} \eqd  \frac{1 + \sum_{m=1}^M \One{T(X_{\sigma_m})
     \geq T(X)} }{1+M}
\end{equation}
is a valid p-value. The equality in distribution above holds because
the ${\sigma_m\circ\sigma_0^{-1}}$'s are i.i.d.~draws from
$\Scal_n$. Hence, this recovers the most commonly implemented form of the permutation test.

\paragraph{Random permutations from a subgroup.} The distributional
equality \eqref{eq:easy} extends to any uniform distribution $q$ over
a subgroup $G$ of $\Scal_n$ since in this case, as before, the random variables
${\sigma_m\circ\sigma_0^{-1}}$ are i.i.d.~draws from $G$. This gives
the following well-known result (see, e.g., \citet[Theorem
2]{hemerik2018exact}):
\begin{corollary}\label{thm:randomperms-subgroup}
Let $G\subseteq\Scal_n$ be a subgroup, and sample
$\sigma_1,\dots,\sigma_M\iidsim\textnormal{Unif}(G)$. Then
\begin{equation}\label{eqn:pvalue_random_subgroup}
P = \frac{1 + \sum_{m=1}^M \One{T(X_{\sigma_m})\geq T(X)}}{1+M}\end{equation}
is a valid p-value, i.e., 
$\Pp{H_0}{P\leq \alpha}\leq \alpha$ for all $\alpha\in[0,1]$.
\end{corollary}

\paragraph{Random permutations from a subset.} Consider now case where
$q$ is a uniform distribution over an arbitrary subset $S$. When $S$
is not a subgroup, the $P$ value
\[
P = \frac{1 + \sum_{m=1}^M \One{T(X_{\sigma_m\circ\sigma_0^{-1}})
     \geq T(X)} }{1+M} 
\]
may not have the same distribution as 
\[
P' = \frac{1 + \sum_{m=1}^M \One{T(X_{\sigma_m})
     \geq T(X)} }{1+M}. 
\]
Here, Theorem~\ref{thm:distrib_on_Sn_sample} gives:
\begin{corollary}\label{thm:subset_sample}
Let $S\subseteq\Scal_n$ be any fixed subset of permutations. 
Sample $\sigma_0,\sigma_1,\dots,\sigma_M\iidsim\textnormal{Unif}(S)$. 
Then
\begin{equation}\label{eqn:pvalue_random}
P = \frac{1 + \sum_{m=1}^M \One{T(X_{\sigma_m \circ \sigma_0^{-1}})\geq T(X)}}{1+M}\end{equation}
is a valid p-value, i.e., 
$\Pp{H_0}{P\leq \alpha}\leq \alpha$ for all $\alpha\in[0,1]$.
\end{corollary}
\noindent To the best of our knowledge, this statement had not been
recorded in the literature. As a variant, the same result holds if we
instead draw permutations without replacement.
\begin{corollary}\label{cor:subset_sample_without_replacement}
  Consider the variant of Corollary~\ref{thm:subset_sample} in which the
  permutations are drawn without replacement. Then the p-value $P$ defined 
in~\eqref{eqn:pvalue_random} is a valid p-value, i.e.,
  $\Pp{H_0}{P\leq \alpha}\leq \alpha$ for all $\alpha\in[0,1]$. In the
  special case where $S$ is a subgroup, the same conclusion  also applies
for the p-value $P$ defined in~\eqref{eqn:pvalue_random_subgroup}.
\end{corollary}
\begin{proof}
  Let $S'\subseteq S$ be a subset of size $M+1$ chosen uniformly at
  random. Let $\sigma_0,\sigma_1,\dots,\sigma_M$ be a random ordering
  of $S'$---in particular, this means that $\sigma_0$ is drawn
  uniformly from $S'$. Then by Theorem~\ref{thm:distrib_on_Sn},  applied
  with $q$ taken to be the uniform distribution over $S'$, $P$ is a
  valid p-value.

The second claim follows from the fact that 
\[(\sigma_1\circ\sigma_0^{-1},\dots,\sigma_M\circ\sigma_0^{-1})\eqd
(\sigma_1,\dots,\sigma_M),
\]
whenever $S$ is a
subgroup.
\end{proof}

To guide the reader, Figure~\ref{fig:flowchart} summarizes the
connections between all the results presented thus far in the paper.
Interestingly, as highlighted in the figure, Theorems~\ref{thm:distrib_on_Sn} and~\ref{thm:distrib_on_Sn_sample}
can be derived from each other; we will elaborate on this connection below.

 Finally, we present another simple example to highlight
the necessity of the $\sigma_0$ term, in the case of nonuniform sampling. 
Indeed, even  ``intuitive'' modifications of the uniform sampling scheme may fail to produce valid p-values.
\begin{example}\label{ex:anthithetic}
If one considers $P' = \frac{1 + \sum_{m=1}^M \One{T(X_{\sigma_m})
     \geq T(X)} }{1+M}$ to be a Monte Carlo estimate of
the p-value $P = \frac{\sum_{\sigma \in \Scal_n} \One{T(X_\sigma) \geq T(X)}}{n!}$ computed in~\eqref{eqn:pvalue_Sn}, then a lower variance estimate may be obtained by 
``antithetic sampling''---that is, pairing a random draw $\sigma_m\in\Scal_n$ with its reverse $\mathrm{Rev}(\sigma_m) = (\sigma_m(n),\dots,\sigma_m(1))$ (see,
e.g.,~\citet{mitchell2022sampling} for an example of this variance reduction technique). However, 
using antithetic sampling can lead to an invalid  p-value---specifically, if $\sigma_1,\dots,\sigma_{M/2}$ are drawn uniformly at random from $\Scal_n$ (or 
from some subgroup $G\subseteq\Scal_n$), and we then set $\sigma_{M/2 + i } = \mathrm{Rev}(\sigma_i)$ for each $i=1,\dots,M/2$,
then the quantity $P'$ may not be a valid p-value. For instance, suppose we take $M=2$, so that $\sigma_2 = \mathrm{Rev}(\sigma_1)$
where $\sigma_1$ is drawn uniformly from $\Scal_n$. Take $T(x) = X_1 + X_n$,
and draw $X_i\iidsim \mathcal{N}(0,1)$. Then
\begin{multline*}P' = \frac{1 + \One{X_{\sigma_1(1)} + X_{\sigma_1(n)} \geq X_1 + X_n} + \One{X_{\sigma_2(1)} + X_{\sigma_2(n)} \geq X_1 + X_n} }{3}\\
=  \frac{1 + 2\cdot \One{X_{\sigma_1(1)} + X_{\sigma_1(n)} \geq X_1 + X_n}}{3}.\end{multline*}
Then we can verify that, conditional on $\sigma_1$, if $\{\sigma_1(1),\sigma_1(n)\} = \{1,n\}$ then $P=1$,
while if $\{\sigma_1(1),\sigma_1(n)\} \neq \{1,n\}$ then $P'=\frac{1}{3}$ or $P'=1$ each with probability $\frac{1}{2}$, which yields
\[\PP{P' = \frac{1}{3}} = \frac{1}{2} - \frac{1}{n(n-1)} > \frac{1}{3},\]
with the last step holding if $n>3$. We can thus see that $P'$ is anti-conservative at the threshold
$\alpha = \frac{1}{3}$.
\end{example}

\begin{figure}[t] \centering\footnotesize
\begin{tikzpicture}

\begin{scope}[xshift=0cm,yshift=0cm]

\node (q) at (2,1.75) {\begin{tabular}{c}distrib.~$q$ on $\Scal_n$
                         \\ Theorem~\ref{thm:distrib_on_Sn} \end{tabular}};
\node (q1) at (1.8,1.75) {};
\node (q2) at (2.2,1.75) {};
\node (qsample) at (2,-1.75) {\begin{tabular}{c}samples from $q$\\Theorem~\ref{thm:distrib_on_Sn_sample} \end{tabular}};
\node (qsample1) at (1.8,-1.75) {};
\node (qsample2) at (2.2,-1.75) {};
\node (G) at (8,2.5) {\begin{tabular}{c}subgroup
                        $G\subseteq\Scal_n$~\eqref{eqn:pvalue_subgroup}\\{[well known]} \end{tabular}};
\node (S) at (8,1) {\begin{tabular}{c}subset $S\subseteq\Scal_n$~\eqref{eqn:pvalue_subset_corrected}\\ \citep{hemerik2018exact} \end{tabular}};
\node (S1) at (9,0.5) {};
\node (Gsample) at (8,-1) {\begin{tabular}{c}sample from $G$ w/
                             repl.\\ Corollary~\ref{thm:randomperms-subgroup}\\\citep{hemerik2018exact}\end{tabular}};
\node (Ssample) at (8,-2.5) {\begin{tabular}{c}sample from $S$ w/ repl.\\Corollary~\ref{thm:subset_sample} \\\end{tabular}};
\node (Gsamplewo) at (14,-1) {\begin{tabular}{c}sample from $G$ w/o repl.\\Corollary~\ref{cor:subset_sample_without_replacement} \end{tabular}};
\node (Ssamplewo) at (14,-2.5) {\begin{tabular}{c}sample from $S$ w/o repl.\\Corollary~\ref{cor:subset_sample_without_replacement} \end{tabular}};
\draw[->][line width=0.5mm,shorten >=15pt,shorten <=15pt] (qsample1.north) -- (q1.south);
\draw[->][line width=0.5mm,shorten >=15pt,shorten <=15pt] (q2.south) -- (qsample2.north);
\draw[->][line width=0.5mm,shorten >=5pt,shorten <=5pt] (q.east) -- (G.west);
\draw[->][line width=0.5mm,shorten >=5pt,shorten <=5pt] (q.east) -- (S.west);
\draw[->][line width=0.5mm,shorten >=5pt,shorten <=5pt] (qsample.east) -- (Gsample.west);
\draw[->][line width=0.5mm,shorten >=5pt,shorten <=5pt] (qsample.east) -- (Ssample.west);
\draw[->][line width=0.5mm,shorten >=0pt,shorten <=5pt] (S1.south east) -- (Gsamplewo.north west);
\draw[->][line width=0.5mm,shorten >=0pt,shorten <=5pt] (S1.south east) -- (Ssamplewo.north west);
\node[rotate=90] (LLN) at (1.6,0) {\scriptsize proved via LLN};
\node[rotate=270] (LLN) at (2.4,0) {\scriptsize proved via $\widehat{q}$};
\end{scope}

\end{tikzpicture}
\caption{A flowchart to illustrate the connections between (most of) the results presented in this paper. Arrows point from more general results
to their special cases.}
\label{fig:flowchart}
\end{figure}
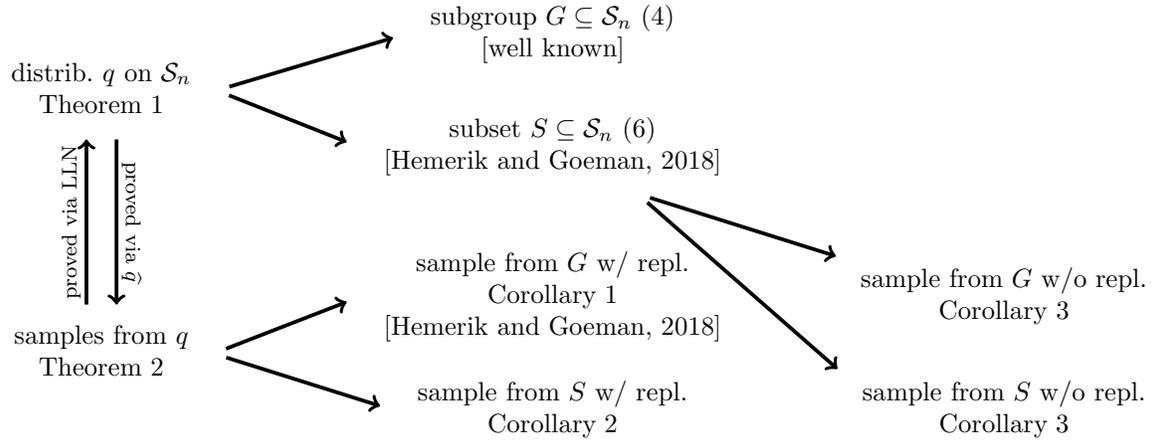

\subsection{Proof of Theorem~\ref{thm:distrib_on_Sn}}
\label{sec:proof1}

First, for any fixed $\sigma'\in \Scal_n$, we have
\begin{equation}\label{eqn:eqd_distrib} \PP{ \sum_{\sigma\in\Scal_n} q(\sigma) \cdot\One{T(X_{\sigma})\geq T(X_{\sigma'})}\leq \alpha} =  \PP{\sum_{\sigma\in\Scal_n} q(\sigma) \cdot\One{T(X_{\sigma\circ\sigma'{}^{-1}})\geq T(X)} \leq \alpha} \end{equation} because $X \eqd X_{\sigma'}$ under $H_0$ (and note 
that $(X_{\sigma'})_{\sigma\circ\sigma'{}^{-1}} = X_\sigma$).
Next, we will apply a deterministic inequality by \cite{harrison2012conservative}:
for all $t_1,\dots,t_N \in [-\infty,\infty]$ and all $\alpha,w_1,\dots,w_N \in [0,\infty]$, 
\[
\sum_{k=1}^N w_k \One{\sum_{i=1}^N w_i \One{t_i \geq t_k} \leq \alpha} \leq \alpha.
\]
Applying this bound with $q(\sigma)$'s in place of the $w_i$'s, and $T(X_\sigma)$'s in place of the $t_i$'s,
we obtain
\begin{multline}\label{eqn:harrison}
\sum_{\sigma'\in \Scal_n} q(\sigma')\cdot  \PP{ \sum_{\sigma \in \Scal_n} q(\sigma)\cdot \One{T(X_{\sigma}) \geq T(X_{\sigma'})}  \leq \alpha} \\
= 
\EE{\sum_{\sigma'\in \Scal_n} q(\sigma')\cdot  \One{\sum_{\sigma \in \Scal_n} q(\sigma)\cdot  \One{T(X_{\sigma}) \geq T(X_{\sigma'})} \leq \alpha} } 
\leq 
\alpha .\end{multline}
Finally, we have
\begin{align*}
\PP{P\leq \alpha}
&=\PP{\sum_{\sigma \in \Scal_n}q(\sigma)\cdot \One{T(X_{\sigma\circ\sigma_0^{-1}}) \geq T(X)}  \leq \alpha} \\
&=\sum_{\sigma'\in \Scal_n} \PP{\sigma_0 = \sigma' \textnormal{ and }\sum_{\sigma \in \Scal_n}q(\sigma)\cdot \One{T(X_{\sigma\circ\sigma'{}^{-1}}) \geq T(X)} \leq \alpha} \\
&=\sum_{\sigma'\in \Scal_n}q(\sigma')\cdot \PP{\sum_{\sigma \in \Scal_n}q(\sigma)\cdot \One{T(X_{\sigma\circ\sigma'{}^{-1}}) \geq T(X)} \leq \alpha} \\
&=\sum_{\sigma'\in \Scal_n}q(\sigma')\cdot \PP{\sum_{\sigma \in \Scal_n}q(\sigma)\cdot \One{T(X_{\sigma}) \geq T(X_{\sigma'})} \leq \alpha} 
\leq \alpha,\end{align*}
where the third step holds since $\sigma_0\sim q$ is drawn independently
of the data $X$, while the last two steps
apply~\eqref{eqn:eqd_distrib} and~\eqref{eqn:harrison}.

\subsection{Connecting Theorems~\ref{thm:distrib_on_Sn} and~\ref{thm:distrib_on_Sn_sample}}
As mentioned earlier, Theorems~\ref{thm:distrib_on_Sn}
and~\ref{thm:distrib_on_Sn_sample} can be derived from each other.   We
now give these proofs to show the connection.

\begin{proof}[Alternative proof of Theorem~\ref{thm:distrib_on_Sn} (via Theorem~\ref{thm:distrib_on_Sn_sample})]
Let \smash{$\sigma_0,\sigma_1,\sigma_2,\dots\iidsim q$}, and for any fixed $M$,
define
\begin{multline*}P_M = \frac{1 + \sum_{m=1}^M \One{T(X_{\sigma_m\circ\sigma_0^{-1}}) \geq T(X)}}{1+M}= \frac{ \sum_{m=0}^M \One{T(X_{\sigma_m\circ\sigma_0^{-1}}) \geq T(X)}}{1+M}\\
 = \sum_{\sigma\in \Scal_n} \frac{\sum_{m=0}^M \One{\sigma_m = \sigma}}{1+M} \cdot \One{T(X_{\sigma\circ\sigma_0^{-1}})\geq T(X)}.\end{multline*}
By the Law of Large Numbers, we see that $\frac{\sum_{m=0}^M \One{\sigma_m = \sigma}}{1+M}\rightarrow q(\sigma)$ almost surely
for all $\sigma\in\Scal_n$, and therefore, 
$P_M\rightarrow P$ almost surely, where $P$ is the p-value defined in~\eqref{eqn:pvalue_general_distribution}. In particular,
this implies that $P_M$ converges to $P$ in distribution, and therefore
\[\PP{P\leq \alpha} = \lim_{M\rightarrow\infty}\PP{P_M\leq \alpha} \leq \alpha,\]
where the last step holds since, for every $M\geq 1$, $P_M$ is a valid p-value by Theorem~\ref{thm:distrib_on_Sn_sample}.
\end{proof}
\begin{proof}[Proof of Theorem~\ref{thm:distrib_on_Sn_sample} (via Theorem~\ref{thm:distrib_on_Sn})]
Let \smash{$\sigma_0,\sigma_1,\dots,\sigma_M\iidsim q$}, and define the empirical distribution
\[\widehat{q} = \frac{1}{M+1}\sum_{m=0}^M \delta_{\sigma_m},\]
where $\delta_{\sigma}$ is the point mass at $\sigma$.
Now we treat $\widehat{q}$ as fixed. Let $k$ be drawn uniformly from $\{0,\dots,M\}$ (that is, $\sigma_k$ 
is drawn at random from $\widehat{q}$).
Applying Theorem~\ref{thm:distrib_on_Sn} with $\widehat{q}$ in place of $q$, we then see that
\[P = \sum_{\sigma\in\Scal_n} \widehat{q}(\sigma)\cdot \One{T(X_{\sigma\circ\sigma_k^{-1}})\geq T(X)} =  \frac{\sum_{m=0}^M \One{T(X_{\sigma_m\circ\sigma_k^{-1}})  \geq T(X) }}{1+M}\]
is a valid p-value conditional on $\widehat{q}$, and therefore also valid after marginalizing over $\widehat{q}$. Since $\sigma_0,\dots,\sigma_M$ are drawn i.i.d.~and are therefore in a random 
order, we see that
\begin{align*}
P &=  \frac{\sum_{m=0}^M \One{T(X_{\sigma_m\circ\sigma_k^{-1}})  \geq T(X) }}{1+M} \eqd  \frac{\sum_{m=0}^M \One{T(X_{\sigma_m\circ\sigma_0^{-1}})  \geq T(X) }}{1+M}\\ 
&= \frac{1+\sum_{m=1}^M \One{T(X_{\sigma_m\circ\sigma_0^{-1}})  \geq T(X) }}{1+M} ,
\end{align*}
which is the desired p-value.
\end{proof}

\subsection{Another perspective: exchangeable permutations}
Many of the results described above can be viewed through the lens of exchangeability---not on the data $X$ (which we 
assume to be exchangeable under the null hypothesis $H_0$), but on the collection of permutations
used to define the p-value $P$.
\begin{theorem}\label{thm:exch}
Let $\sigma_0,\sigma_1,\dots,\sigma_M\in\Scal_n$ be a random set of permutations, which are exchangeable, i.e.,
\[
(\sigma_0,\sigma_1,\dots,\sigma_M)\eqd (\sigma_{\pi(0)},\sigma_{\pi(1)},\dots,\sigma_{\pi(M)})
\]
for any fixed permutation $\pi$ on $\{0,\dots,M\}$. Then
\[ 
P = \frac{1+\sum_{m=1}^M \One{T(X_{\sigma_m\circ\sigma_0^{-1}}) \geq T(X)}}{1+M} 
\]
is a valid p-value, i.e., $\Pp{H_0}{P\leq \alpha}\leq\alpha$ for all $\alpha\in[0,1]$.
\end{theorem}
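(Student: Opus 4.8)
The plan is to derive Theorem~\ref{thm:exch} from Theorem~\ref{thm:distrib_on_Sn} by conditioning on the empirical distribution of the permutations. This is essentially the ``via $\widehat q$'' argument of the previous subsection run in reverse, now with the weaker hypothesis of exchangeability in place of independence between the $\sigma_m$'s. As in every preceding result, I treat $(\sigma_0,\dots,\sigma_M)$ as a randomization device drawn independently of the data $X$.

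First I would put $P$ into the weighted form of~\eqref{eqn:pvalue_general_distribution}. Writing the empirical distribution
\[\widehat q = \frac{1}{M+1}\sum_{m=0}^M \delta_{\sigma_m},\]
the indicator $\One{T(X_{\sigma\circ\sigma_0^{-1}})\geq T(X)}$ takes a common value across all $m$ with $\sigma_m=\sigma$, so regrouping the numerator of $P$ by value gives the deterministic identity
\[P = \sum_{\sigma\in\Scal_n} \widehat q(\sigma)\cdot\One{T(X_{\sigma\circ\sigma_0^{-1}})\geq T(X)}.\]
This is exactly the statistic of Theorem~\ref{thm:distrib_on_Sn}, but now built from the random distribution $\widehat q$ with reference permutation $\sigma_0$.

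Next I would condition on $\widehat q$ (equivalently, on the unordered multiset $\{\!\{\sigma_0,\dots,\sigma_M\}\!\}$). Because the sequence is exchangeable, conditional on this multiset it is uniform over its orderings; in particular the reference coordinate obeys $\sigma_0\mid\widehat q\sim\widehat q$ and remains independent of $X$. Treating $\widehat q$ as fixed, the pair $(\widehat q,\sigma_0)$ then satisfies the hypotheses of Theorem~\ref{thm:distrib_on_Sn}---a fixed distribution together with a reference drawn from it, independent of the data---so $\PPst{P\leq\alpha}{\widehat q}\leq\alpha$ for every realization of $\widehat q$. Integrating over $\widehat q$ gives $\PP{P\leq\alpha}=\EE{\PPst{P\leq\alpha}{\widehat q}}\leq\alpha$, which is the claim.

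The one point that needs care is the exchangeability fact used above: that conditioning a finite exchangeable sequence on its empirical measure makes $\sigma_0$ an \emph{exact} draw from $\widehat q$ (still independent of $X$), rather than merely matching $\widehat q$ in expectation. Granting this, the conditional validity is delivered wholesale by Theorem~\ref{thm:distrib_on_Sn}. A direct route via \citet{harrison2012conservative}'s inequality is also conceivable, symmetrizing over which $\sigma_k$ serves as the reference; I would avoid it, since changing the reference applies a different $\sigma_k^{-1}$ to every term and so scrambles the whole collection of statistics---precisely the complication the conditioning argument absorbs into Theorem~\ref{thm:distrib_on_Sn}. Finally, this recovers Theorem~\ref{thm:distrib_on_Sn_sample} (i.i.d.\ draws) and Corollary~\ref{cor:subset_sample_without_replacement} (sampling without replacement) as special cases, both sequences being exchangeable.
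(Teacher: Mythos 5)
Your proposal is correct and is essentially the paper's own proof: the paper likewise forms the empirical distribution $\widehat{q} = \frac{1}{M+1}\sum_{m=0}^M \delta_{\sigma_m}$, observes that exchangeability makes $\sigma_0$ an exact draw from $\widehat{q}$ conditional on the unordered set of permutations (still independent of $X$), and applies Theorem~\ref{thm:distrib_on_Sn} with $\widehat{q}$ in place of $q$ before marginalizing. (The paper also records a second, more direct proof via exchangeability of the statistics $T(X_{\sigma_m\circ\sigma_0^{-1}})$ under $H_0$, but your primary argument coincides with its first proof, including the one delicate point you correctly flag.)
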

\noindent 
Many of the results stated earlier can be viewed as special cases---in
particular, the results for a subgroup $G$, or for a subset $S$, as
well as our more general result Theorem~\ref{thm:distrib_on_Sn_sample}
for permutations drawn i.i.d.~from $q$.

\begin{proof} To be clear, this theorem is essentially just a new
  perspective, and can be proved as a corollary to
  Theorem~\ref{thm:distrib_on_Sn}. To see why, let
  $\sigma_0,\dots,\sigma_M$ be exchangeable, and let
  $\widehat{q} = \frac{1}{M+1}\sum_{m=0}^M\delta_{\sigma_m}$ be the
  empirical distribution induced by the {\em unordered} set of drawn
  permutations.  Then since $\sigma_0,\dots,\sigma_M$ is exchangeable,
  conditional on $\widehat{q}$ it holds that $\sigma_0$ is a random
  draw from $\widehat{q}$. Applying Theorem~\ref{thm:distrib_on_Sn}
  with $\widehat{q}$ in place of $q$ gives the conclusion.
\end{proof} 
However, we can also prove this result in a more intuitive way, using the framework of exchangeability:
\begin{proof}[Alternative proof of Theorem~\ref{thm:exch}]
  Since the sequence $\sigma_0,\sigma_1,\dots,\sigma_M$ is exchangeable,
  \[T(X_{\sigma_0}),T(X_{\sigma_1}),\dots,T(X_{\sigma_M})\]
  is also exchangeable conditional on $X$. It is thus still
  exchangeable after marginalizing over $X$. Therefore, under the null
  hypothesis $H_0$, the test statistic values
\begin{equation}\label{eq:exch-stat}
 T(X)=T(X_{\sigma_0\circ\sigma_0^{-1}}), \ T(X_{\sigma_1\circ\sigma_0^{-1}}),\  \dots, \ T(X_{\sigma_M\circ\sigma_0^{-1}})
\end{equation}
are also exchangeable---this follows immediately from the previous
line because $X \eqd  X_{\sigma_0^{-1}}$ under $H_0$.  This shows that the
p-value $P$ defined in~\eqref{eqn:pvalue_general_distribution_sample}
is valid.
\end{proof}

\section{Averaging to reduce variance}
The p-value $P$ defined in~\eqref{eqn:pvalue_general_distribution} can equivalently be written as
\[P = \Ppst{\sigma \sim q}{T(X_{\sigma\circ\sigma_0^{-1}})\geq
  T(X)}{X,\sigma_0}.\]
It is clear that $P$ is random even if we condition on the observed
data $X$, because of the randomness due to $\sigma_0$.  Consequently,
in some settings $P$ may be quite variable conditional on the data
$X$, and this may be undesirable.

To address this issue, we can also consider averaging over $\sigma_0$
(in addition to averaging over $\sigma$) in the calculation of
$P$. This alternative definition is now a deterministic function of
the observed data $X$, but may no longer be a valid
p-value. Nonetheless, the following theorem shows a bound on the Type
I error.
\begin{theorem}\label{thm:factor_of_2}
Let $q$ be any distribution over $\sigma\in\Scal_n$. Define
\begin{equation}\label{eqn:pvalue_general_distribution_factor_of_2}
\bar P = \sum_{\sigma,\sigma_0\in\Scal_n} q(\sigma)q(\sigma_0) \cdot\One{T(X_{\sigma\circ\sigma_0^{-1}})\geq T(X)},\end{equation}
or equivalently,
\[\bar P = \Ppst{\sigma,\sigma_0 \iidsim q}{T(X_{\sigma\circ\sigma_0^{-1}})\geq T(X)}{X}.\]
Then $\bar P$ is a valid p-value up to a factor of 2, 
i.e., $\Pp{H_0}{\bar P\leq \alpha}\leq 2\alpha$ for all $\alpha\in[0,1]$.
 In other words, the quantity $\min\{2\bar P,1\}$ is a valid p-value.
\end{theorem}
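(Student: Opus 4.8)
The plan is to recognize $\bar P$ as the conditional expectation, given the data $X$, of the valid p-value $P$ from Theorem~\ref{thm:distrib_on_Sn}, and then to run a sharp de-randomization argument. Writing $P = \sum_{\sigma\in\Scal_n} q(\sigma)\One{T(X_{\sigma\circ\sigma_0^{-1}})\geq T(X)}$ with $\sigma_0\sim q$ as in~\eqref{eqn:pvalue_general_distribution}, the only randomness in $P$ beyond $X$ is the draw $\sigma_0\sim q$, so averaging over $\sigma_0$ gives exactly $\bar P = \EEst{P}{X}$. In particular $\bar P$ is a deterministic function of $X$, we have $P\in[0,1]$, and $P$ is super-uniform by Theorem~\ref{thm:distrib_on_Sn}. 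I would then fix $\alpha$ and set $A = \{\bar P\leq\alpha\}$, an event measurable with respect to $X$; the goal is to show $\PP{A}\leq 2\alpha$.

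The argument rests on sandwiching $\EE{P\,\One{A}}$ between two bounds. For the upper bound, since $A$ is determined by $X$, the tower property gives $\EE{P\,\One{A}} = \EE{\EEst{P}{X}\,\One{A}} = \EE{\bar P\,\One{A}}$, and on $A$ we have $\bar P\leq\alpha$, so $\EE{P\,\One{A}}\leq\alpha\,\PP{A}$. For the matching lower bound, I would use the layer-cake identity $P = \int_0^1\One{P>u}\,du$ (valid because $P\in[0,1]$) together with Fubini to write $\EE{P\,\One{A}} = \int_0^1\PP{\{P>u\}\cap A}\,du$. For each level $u$, $\PP{\{P>u\}\cap A}\geq\PP{A} - \PP{P\leq u}\geq\PP{A} - u$, where the last inequality is precisely the validity of $P$ at level $u$. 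Bounding the integrand below by $(\PP{A}-u)_+$ and integrating yields $\EE{P\,\One{A}}\geq\int_0^1(\PP{A}-u)_+\,du = \tfrac12(\PP{A})^2$. Combining the two bounds gives $\tfrac12(\PP{A})^2\leq\alpha\,\PP{A}$, hence $\PP{A}\leq 2\alpha$.

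The main obstacle is obtaining the sharp constant $2$ rather than a weaker factor. A one-shot Markov bound---estimating $\PPst{P\leq 2\alpha}{X}\geq\tfrac12$ on $A$ and invoking validity only at the single level $2\alpha$---yields merely $\PP{A}\leq 4\alpha$. The improvement to the sharp factor $2$ comes precisely from exploiting the validity of $P$ \emph{simultaneously} at every level $u\in(0,\PP{A})$ through the layer-cake integral, which is what the lower bound on $\EE{P\,\One{A}}$ accomplishes; this bound is tight, being saturated when $P$ is conditionally uniform on $[0,2\alpha]$ on the event $A$. I expect the only points needing care are the measurability of $A$ with respect to $X$ (so that the tower step is legitimate) and the boundedness $P\in[0,1]$ (so that the layer-cake integral runs over $[0,1]$), both of which are immediate from the definition of $\bar P$.
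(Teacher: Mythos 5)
Your proof is correct, and it takes a genuinely different route from the paper's. The paper argues by reduction to an external result: it draws $\sigma_0^{(1)},\sigma_0^{(2)},\dots\iidsim q$, notes that each $P_m$ is a valid p-value by Theorem~\ref{thm:distrib_on_Sn}, invokes the result of \citet{vovk2020combining} that the arithmetic mean of finitely many (arbitrarily dependent) p-values is valid up to a factor of $2$, and then passes to the limit $\bar P_M\rightarrow \bar P$ by the Law of Large Numbers. You instead work directly with the identity $\bar P=\EEst{P}{X}$ and give a self-contained sandwich argument: with $A=\{\bar P\leq\alpha\}$, the tower property gives $\EE{P\One{A}}=\EE{\bar P\One{A}}\leq\alpha\,\PP{A}$, while the layer-cake formula plus super-uniformity of $P$ at every level $u$ gives $\EE{P\One{A}}\geq\int_0^1(\PP{A}-u)_+\,du=\tfrac12\PP{A}^2$, and combining yields $\PP{A}\leq 2\alpha$. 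Every step checks out: $A$ is $X$-measurable because $\bar P$ is a deterministic function of $X$, $P\in[0,1]$ justifies the layer-cake integral over $[0,1]$, and $\int_0^1(\PP{A}-u)_+\,du=\tfrac12\PP{A}^2$ since $\PP{A}\leq 1$. Your route buys two things: it avoids the limiting step entirely (the paper's passage from $\PP{\bar P_M\leq\alpha}\leq 2\alpha$ to $\PP{\bar P\leq\alpha}\leq 2\alpha$ via almost-sure convergence strictly requires a small $\eps$-argument at possible discontinuity points of the limiting distribution, which the paper glosses over), and it establishes a more general principle---for any valid $[0,1]$-valued p-value $P$ and any sub-$\sigma$-algebra $\mathcal{G}$, the conditional expectation $\EEst{P}{\mathcal{G}}$ is valid up to a factor of $2$---in effect re-deriving the core of the Vovk--Wang averaging bound for continuous mixtures rather than citing it. What the paper's route buys is brevity and an explicit conceptual link to the p-value--combination literature, treating $\bar P$ as a limit of finite averages of the p-values already constructed in Theorem~\ref{thm:distrib_on_Sn}.
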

\begin{proof}
Draw \smash{$\sigma_0^{(1)},\sigma_0^{(2)},\dots\iidsim q$}. Let 
\[P_m= \sum_{\sigma\in\Scal_n} q(\sigma)\cdot\One{T(X_{\sigma\circ\sigma_0^{(m)}{}^{-1}})\geq T(X)},\]
for each $m\geq1$. Then by Theorem~\ref{thm:distrib_on_Sn},
each $P_m$ is a valid p-value. It is known \citep{ruschendorf1982random,vovk2020combining} that the average of valid
p-values is a valid up to a factor of 2, i.e., for any $M\geq 1$ the average $\bar{P}_M = \frac{1}{M}\sum_{m=1}^M P_m$
satisfies $\PP{\bar{P}_M\leq \alpha}\leq 2\alpha$ for all $\alpha\in[0,1]$. We can equivalently write
\[\bar{P}_M
= \sum_{\sigma'\in \Scal_n} \frac{\sum_{m=1}^M \One{\sigma_0^{(m)}=\sigma'}}{M} 
\cdot \sum_{\sigma\in\Scal_n} q(\sigma)\cdot\One{T(X_{\sigma\circ\sigma'{}^{-1}})\geq T(X)}.\]
By the Law of Large Numbers, $\bar{P}_M$ converges almost surely to the p-value $\bar P$
defined in~\eqref{eqn:pvalue_general_distribution_factor_of_2},
which completes the proof.
\end{proof}

Returning to Example~\ref{example1}, we see that while $P$ was a
\emph{mixture} of
$P_{\mathrm{Id}}, P_{\sigma_{1\leftrightarrow 4, 2\leftrightarrow 3}},
P_{\sigma_{1\leftrightarrow 3, 2\leftrightarrow 4}}$,
we now have that $\bar P$ is an \emph{average} of these, meaning
$\bar P = \tfrac13(P_{\mathrm{Id}} + P_{\sigma_{1\leftrightarrow 4,
    2\leftrightarrow 3}} + P_{\sigma_{1\leftrightarrow 3,
    2\leftrightarrow 4}})$. Simplifying, we get
\[
\bar P = \begin{cases}
\frac{5}{9}, & \textnormal{ w.p. } 1/2, \\
1, & \textnormal{ w.p. } 1/2.
\end{cases}
\]
It is worth noting that this new quantity $\bar P$ is neither more conservative nor more anti-conservative than the p-value $P$~\eqref{eq:example-fixed-P} from earlier. This is perhaps a more general phenomenon: the average of p-values need not in general be anti-conservative, and indeed it could often be more conservative, than the original p-values.

Analogously, the p-value in Theorem~\ref{thm:distrib_on_Sn_sample}, computed via random samples from $q$, can also be averaged to reduce variance.
\begin{theorem}\label{thm:factor_of_2_sample}
Let $q$ be any distribution over $\sigma\in\Scal_n$. 
Let \smash{$\sigma_0,\sigma_1,\dots,\sigma_M\iidsim q$}, and define
\begin{equation}\label{eqn:pvalue_general_distribution_sample}
\bar{P} = \frac{\sum_{m=0}^M\sum_{m'=0}^M \One{T(X_{\sigma_m\circ\sigma_{m'}^{-1}})  \geq T(X) }}{(1+M)^2}.\end{equation}
Then $P$ is a valid p-value up to a factor of 2,
 i.e., 
$\Pp{H_0}{P\leq \alpha}\leq 2\alpha$ for all $\alpha\in[0,1]$.
 Thus, as before,  the quantity $\min\{2\bar P,1\}$ is a valid p-value.
\end{theorem}
\noindent The proof is similar to that of
Theorem~\ref{thm:factor_of_2}, and we omit it for brevity.

\section{Connections to the literature}\label{sec:connections}

We next mention a few connections to  the broader literature.

\subsection{Permutation tests vs randomization tests}\label{sec:tea}
\citet{hemerik2021another} describe the difference between two testing frameworks, 
permutation tests (as studied in our present work) versus randomization tests.
The difference is subtle, because randomization tests may still use permutations.
Specifically,
\citet{hemerik2021another} highlight
\begin{quote}
an important difference in mathematical reasoning between
these classes: a permutation test fundamentally requires that the set of permutations has a group
structure, in the algebraic sense; the reasoning behind a randomisation test is not based on such
a group structure, and it is possible to use an experimental design that does not correspond to a
group.
\end{quote}

To better understand this distinction, we can consider a scenario where a fixed subset $S\subseteq\Scal_n$,
which is not a subgroup, is used for a randomization test rather than a permutation test.
Consider a study comparing a treatment versus a placebo, with $n/2$ many subjects assigned to 
each of the two groups. We can use a permutation $\sigma$ to denote the treatment assignments,
with $\sigma(i)\leq n/2$ indicating that subject $i$ receives the treatment, and $\sigma(i)>n/2$ indicating
that subject $i$ receives the placebo. 
Now we switch notation, to be able to compare to permutation tests more directly---writing
$X= (1,\dots,1,0,\dots,0)$, suppose
that we will assign treatments via the permuted vector $X_{\sigma}$, i.e., for each subject
$i=1,\dots,n$, under this permutation $\sigma$
 the $i$th subject will receive the treatment if $X_{\sigma(i)}=1$,
or the placebo if $X_{\sigma(i)}=0$.

Now suppose that we draw a random treatment assignment $\sigma_{\textnormal{asgn}}\sim\textnormal{Unif}(S)$,
from a fixed subset $S\subseteq\Scal_n$ (for example, $S$ may be chosen to restrict to treatment assignments
that are equally balanced across certain subpopulations). After the treatments are administered,
the measured response variable
is given by $Y=(Y_1,\dots,Y_n)$. Fix any test statistic $T(X) = T(X,Y)$ (we will implicitly condition on $Y$),
and compute
\begin{equation}\label{eqn:pvalue_subset_randomization_test}
P = \frac{\sum_{\sigma\in S}\One{T(X_{\sigma})\geq T(X_{\sigma_{\textnormal{asgn}}})}}{|S|}.\end{equation}
Since $\sigma_{\textnormal{asgn}}$ was drawn uniformly from $S$, this quantity $P$ is a valid p-value. 
In the terminology of~\citet{hemerik2021another}, this test is a randomization test,
not a permutation test. While the set
of possible
treatment assigments $\{X_\sigma: \sigma\in S\}$  happens to be indexed by permutations $\sigma$,
the group structure of permutations is not used in any way, and we do not rely on any invariance properties.

Comparing to the invalid p-value $P = \frac{\sum_{\sigma \in S} \One{T(X_\sigma)\geq T(X)}}{|S|}$
considered in~\eqref{eqn:pvalue_subset}, we can easily see the distinction:
for a randomization test, the observed statistic is $T(X_{\sigma_{\textnormal{asgn}}})$ for
a randomly drawn $\sigma_{\textnormal{asgn}}\sim\textnormal{Unif}(S)$, 
while in the permutation test in~\eqref{eqn:pvalue_subset}, the observed statistic is $T(X)$
(i.e., using the {\em fixed} permutation $\textnormal{Id}$ in place of a randomly drawn $\sigma_{\textnormal{asgn}}$).
For this reason, the randomization test p-value in~\eqref{eqn:pvalue_subset_randomization_test}
is valid, while the permutation test calculation in~\eqref{eqn:pvalue_subset} is not valid in general.

Now we again consider \citet{hemerik2018exact}'s method using a fixed subset.
This test~\eqref{eqn:pvalue_subset_corrected} is a permutation test, not a randomization test---the observed data $X$, and its
corresponding statistic $T(X)$, do not arise from a random treatment assignment.
More generally, our proposed test~\eqref{eqn:pvalue_general_distribution} using an arbitrary
distribution $q$ on $\Scal_n$ is again a permutation test rather than a randomization test---that is, the observed
data is given by $X$ itself, not by a randomly chosen treatment assignment $X_{\sigma_{\textnormal{asgn}}}$
for $\sigma_{\textnormal{asgn}}\sim q$.
Nonetheless, we are able to produce a valid p-value without assuming an underlying group structure
or uniform sampling for the permutations considered by the test.

\subsection{Exchangeable MCMC}\label{sec:mcmc}

The result of Theorem~\ref{thm:distrib_on_Sn_sample}, which allows for random samples drawn from an arbitrary
distribution $q$ on $\Scal_n$, is closely connected to \citet{besag1989generalized}'s well known construction for 
obtaining exchangeable samples from Markov chain Monte Carlo (MCMC) sampling.

Consider a distribution $Q_0$ on $\Zcal$, and suppose we want to test 
\[H_0: \ Z\sim Q_0\]
with some test statistic $T(Z)$. To find a significance threshold for
$T(Z)$, we would ideally like to draw from the null distribution,
i.e., compare $T(Z)$ against $T(Z_1),\dots,T(Z_M)$ for
\smash{$Z_1,\dots,Z_m \iidsim Q_0$}.  However, in many settings, sampling
directly from $Q_0$ is impossible, but we instead have access to a Markov chain
whose stationary distribution is $Q_0$. If we run the Markov chain 
initialized at $Z$ to obtain draws $Z_1,\dots,Z_M$ (say, running the
Markov chain for some fixed number of steps $s$ between each draw),
then dependence among these sequentially drawn samples means that
$Z,Z_1,\dots,Z_M$ are not i.i.d., and are not even
exchangeable. Without studying the mixing properties of the
Markov chain, we cannot determine how large the number of steps needs
to be for the dependence to become negligible. Instead,
\citet{besag1989generalized} propose a construction where the samples
are drawn in parallel (rather than sequentially), which ensures
exchangeability:
\begin{theorem}[{\citet[Section 2]{besag1989generalized}}]\label{thm:besag}
  Let $Q_0$ be any distribution on a probability space
  $\Zcal$. Construct a Markov chain on $\Zcal$ with stationary
  distribution $Q_0$, whose forward and backward transition
  distributions (initialized at $z\in\Zcal$) are denoted by
  $Q_{\rightarrow}(\cdot|z)$ and $Q_{\leftarrow}(\cdot|z)$. Let
  $Q_{\rightarrow}^s(\cdot|z)$ and $Q_{\leftarrow}^s(\cdot|z)$ denote
  the forward and backward transition distributions after running $s$
  steps of the Markov chain, for some fixed $s\geq 1$.  
  Given an
  initialization $Z$, suppose we generate data as in the left plot of Figure~\ref{fig:BC}:
\[\begin{cases}\textnormal{First, draw $Z_* \sim Q^s_{\leftarrow}(\cdot|Z)$;}\\
\textnormal{Then, draw $Z_1,\dots,Z_M\iidsim Q^s_{\rightarrow}(\cdot|Z_*)$.}\end{cases}\]
If it holds marginally that $Z\sim Q_0$, then the draws $Z,Z_1,\dots,Z_M$ are exchangeable.
\end{theorem}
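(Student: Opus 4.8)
The plan is to prove exchangeability by conditioning on the ``hub'' variable $Z_*$ and showing that, given $Z_*$, the $M+1$ variables $Z,Z_1,\dots,Z_M$ are conditionally i.i.d. Since i.i.d.\ random variables are exchangeable, and exchangeability conditional on $Z_*$ is preserved after marginalizing over $Z_*$, this immediately yields the claim.

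The key structural fact I would establish first is the time-reversal identity relating the forward and backward kernels. By definition, the backward transition $Q_\leftarrow(\cdot\,|z)$ is the conditional law of $Z_{t-1}$ given $Z_t=z$ for a chain run in stationarity, so it satisfies $Q_0(dz)\,Q_\leftarrow(dz'|z) = Q_0(dz')\,Q_\rightarrow(dz|z')$. Iterating this identity $s$ times gives the $s$-step version $Q_0(dz)\,Q^s_\leftarrow(dz_*|z) = Q_0(dz_*)\,Q^s_\rightarrow(dz|z_*)$. I would then use this to identify the joint law of the pair $(Z,Z_*)$ produced by the first stage of the construction. Since $Z\sim Q_0$ marginally and $Z_*\sim Q^s_\leftarrow(\cdot\,|Z)$, the pair $(Z,Z_*)$ has joint law $Q_0(dz)\,Q^s_\leftarrow(dz_*|z)$, which by the reversal identity equals $Q_0(dz_*)\,Q^s_\rightarrow(dz|z_*)$. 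In words: $Z_*\sim Q_0$ as well, and conditional on $Z_*$ we have $Z\sim Q^s_\rightarrow(\cdot\,|Z_*)$.

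Now I would combine this with the second stage of the construction. Conditional on $(Z,Z_*)$, the draws $Z_1,\dots,Z_M$ are i.i.d.\ from $Q^s_\rightarrow(\cdot\,|Z_*)$---a law depending only on $Z_*$, not on $Z$. Hence, conditional on $Z_*$ alone, $Z_1,\dots,Z_M$ are i.i.d.\ $Q^s_\rightarrow(\cdot\,|Z_*)$ and independent of $Z$; and by the previous paragraph, $Z$ is itself a draw from this same law $Q^s_\rightarrow(\cdot\,|Z_*)$. Therefore, conditional on $Z_*$, all $M+1$ variables $Z,Z_1,\dots,Z_M$ are i.i.d.\ from $Q^s_\rightarrow(\cdot\,|Z_*)$, and in particular exchangeable. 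Averaging over the law of $Z_*$ preserves exchangeability, which completes the argument.

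The main obstacle I anticipate lies in establishing the reversal identity rigorously---that is, correctly formalizing what the ``backward transition distribution'' means for a general, possibly non-reversible, chain on an arbitrary space $\Zcal$, and justifying the $s$-step iteration at the level of transition kernels (regular conditional distributions) rather than densities. Note that reversibility is \emph{not} assumed: for a reversible chain one would have $Q_\leftarrow=Q_\rightarrow$, but in general they differ, and keeping them distinct is precisely the point of introducing both kernels. Once this identity is in hand, the remaining reasoning is just the short conditional-independence argument above.
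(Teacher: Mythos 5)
Your proof is correct, but there is nothing in the paper to compare it against: the paper states this theorem with a citation to \citet[Section 2]{besag1989generalized} and uses only its conclusion (to deduce Theorem~\ref{thm:distrib_on_Sn_sample}), without reproducing a proof. Your argument is essentially the classical Besag--Clifford one, and it is sound: take the backward kernel to be the $Q_0$-reversal of the forward kernel, so that $Q_0(dz)\,Q_\leftarrow(dz'\,|\,z)=Q_0(dz')\,Q_\rightarrow(dz\,|\,z')$; a short induction (using that the reversed chain is itself Markov with kernel $Q_\leftarrow$) upgrades this to the $s$-step identity; from this, $Z_*\sim Q_0$ and, given $Z_*$, the initial point $Z$ has law $Q^s_\rightarrow(\cdot\,|\,Z_*)$ and is conditionally independent of $Z_1,\dots,Z_M$, so all of $Z,Z_1,\dots,Z_M$ are conditionally i.i.d.\ given the hub---a property stronger than exchangeability and preserved under mixing over $Z_*$. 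Your closing caveat is also the right one to flag: for a general non-reversible chain on an abstract space, $Q_\leftarrow$ must be constructed as a regular conditional distribution (so one needs, e.g., $\Zcal$ standard Borel), and it is only defined relative to the particular stationary law $Q_0$. It is worth noting that in the paper's actual application (Section~\ref{sec:mcmc}) this subtlety evaporates: the permutation kernel $Q_\rightarrow(\cdot\,|\,x)=\sum_\sigma q(\sigma)\,\delta_{x_\sigma}$ admits the explicit backward kernel $Q_\leftarrow(\cdot\,|\,x)=\sum_\sigma q(\sigma)\,\delta_{x_{\sigma^{-1}}}$, and this single pair satisfies the reversal identity simultaneously for \emph{every} exchangeable $Q_0$, which is exactly why the construction can be run without knowing $Q_0$.
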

\noindent Given this exchangeability property, the quantity
$P = \frac{1 + \sum_{m=1}^M \One{T(Z_m)\geq T(Z)}}{1+M}$ is then a
valid p-value for testing $H_0: Z\sim Q_0$.  

\begin{figure}[t] 
\begin{tikzpicture}

\begin{scope}[xshift=0cm,yshift=0cm]

\node (X) at (-3,0) {};
\node (Xhub) at (0,0) {};
\node (X1) at (0.75,2.9) {};
\node (X2) at (2.7,1.3) {};
\node (Xdots) at (2.7,-1.3) {};
\node (XM) at (0.75,-2.9) {};
\node (Xtext) at (-3,0) {$Z$};
\node (Xhubtext) at (0,0) {$Z_*$};
\node (X1text) at (0.75, 2.9) {$Z_1$};
\node (X2text) at (2.7,1.3) {$Z_2$};
\node (Xdotstext) at (2.7,-1.3) {$\vdots$};
\node (XMtext) at (0.75,-2.9) {$Z_M$};
\node (Xhubstep) at (-1.5,0.2) {\scriptsize backward step};
\node (Xhubstep_) at (-1.5,-0.2) {\scriptsize via $Q_{\leftarrow}(\cdot |Z)$};
\node (X1step) at (-0.65,1.75) {\scriptsize forward step};
\node (X1step_) at (-0.65,1.35) {\scriptsize via $Q_{\rightarrow}(\cdot |Z_*)$};
\node (X2step) at (2.45,0.55) {\scriptsize forward step};
\node (X2step_) at (2.45,0.15) {\scriptsize via $Q_{\rightarrow}(\cdot |Z_*)$};
\node (XMstep) at (-0.65,-1.35) {\scriptsize forward step};
\node (XMstep_) at (-0.65,-1.75) {\scriptsize via $Q_{\rightarrow}(\cdot |Z_*)$};
\draw[->][line width=0.5mm,shorten >=5pt,shorten <=5pt] (X.east) -- (Xhub.west);
\draw[<-][line width=0.5mm,shorten >=5pt,shorten <=5pt] (X1.south east) -- (Xhub.north west);
\draw[<-][line width=0.5mm,shorten >=5pt,shorten <=5pt] (X2.south) -- (Xhub.north);
\draw[<-][line width=0.5mm,shorten >=5pt,shorten <=5pt] (Xdots.north) -- (Xhub.south);
\draw[<-][line width=0.5mm,shorten >=5pt,shorten <=5pt] (XM.north east) -- (Xhub.south west);
\end{scope}

\begin{scope}[xshift=8.5cm,yshift=0cm]

\node (X) at (-3,0) {};
\node (Xhub) at (0,0) {};
\node (X1) at (0.75,2.9) {};
\node (X2) at (2.7,1.3) {};
\node (Xdots) at (2.7,-1.3) {};
\node (XM) at (0.75,-2.9) {};
\node (Xtext) at (-3,0) {$X$};
\node (Xhubtext) at (0,0) {$X_{\sigma_0^{-1}}$};
\node (X1text) at (0.75, 2.9) {$X_{\sigma_1\circ\sigma_0^{-1}}$};
\node (X2text) at (2.7,1.3) {$X_{\sigma_2\circ\sigma_0^{-1}}$};
\node (Xdotstext) at (2.7,-1.3) {$\vdots$};
\node (XMtext) at (0.75,-2.9) {$X_{\sigma_M\circ\sigma_0^{-1}}$};
\node (Xhubstep) at (-1.5,0.2) {\scriptsize draw $\sigma_0\sim q$};
\node (Xhubstep_) at (-1.5,-0.2) {\scriptsize \& apply $\sigma_0^{-1}$};
\node (X1step) at (-0.65,1.75) {\scriptsize draw $\sigma_1\sim q$};
\node (X1step_) at (-0.65,1.35) {\scriptsize \& apply $\sigma_1$};
\node (X2step) at (2.45,0.55) {\scriptsize draw $\sigma_2\sim q$};
\node (X2step_) at (2.45,0.15) {\scriptsize \& apply $\sigma_2$};
\node (XMstep) at (-0.65,-1.35) {\scriptsize draw $\sigma_M\sim q$};
\node (XMstep_) at (-0.65,-1.75) {\scriptsize \& apply $\sigma_M$};
\draw[->][line width=0.5mm,shorten >=8pt,shorten <=5pt] (X.east) -- (Xhub.west);
\draw[<-][line width=0.5mm,shorten >=5pt,shorten <=5pt] (X1.south east) -- (Xhub.north west);
\draw[<-][line width=0.5mm,shorten >=5pt,shorten <=8pt] (X2.south) -- (Xhub.north);
\draw[<-][line width=0.5mm,shorten >=8pt,shorten <=5pt] (Xdots.north) -- (Xhub.south);
\draw[<-][line width=0.5mm,shorten >=5pt,shorten <=5pt] (XM.north east) -- (Xhub.south west);
\end{scope}

\end{tikzpicture}
\caption{Left: \citet{besag1989generalized}'s parallel construction (with $s=1$). Right: the construction used in Theorem~\ref{thm:distrib_on_Sn_sample}.}
\label{fig:BC}
\end{figure}
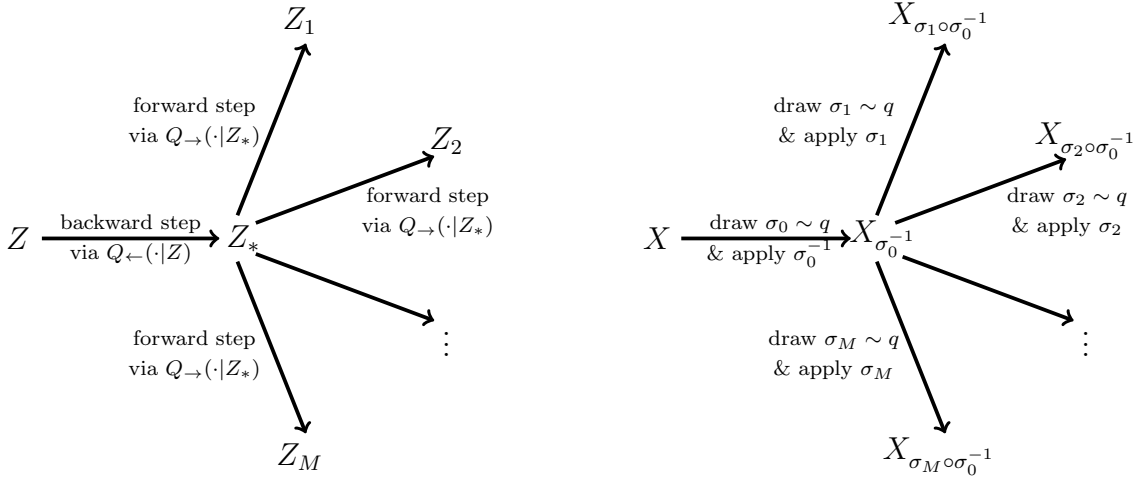

Now we will see how Theorem~\ref{thm:distrib_on_Sn_sample} is related to this
result. Let $\Zcal = \Xcal^n$, and let $Q_0$ be any exchangeable
distribution. In the setting of this paper, we do not know $Q_0$
precisely, which makes it a bit different from a typical setting
where~\cite{besag1989generalized}'s method is applied. However, we
will work with a Markov chain for which {\em any} exchangeable
distribution $Q_0$ is stationary, and in fact, Theorem~\ref{thm:besag} holds
 regardless
of whether $Q_0$ is the unique stationary distribution for the Markov
chain.

Consider the Markov chain given by applying a randomly chosen permutation $\sigma\sim q$, that is,
for $x = (x_1,\dots,x_n)$,
\[
Q_{\rightarrow}(\cdot|x) = \sum_{\sigma \in S} q(\sigma) \cdot\delta_{x_\sigma},
\]
where $\delta_{x_\sigma}$ is the point mass at $x_\sigma$, while the backward transition probabilities are given by
\[
Q_{\leftarrow}(\cdot|x) = \sum_{\sigma \in S}q(\sigma) \cdot\delta_{x_{\sigma^{-1}}}.
\]
Then, to implement the test described in Theorem~\ref{thm:distrib_on_Sn_sample}, we run \citet{besag1989generalized}'s method (with $s=1$): we define $X_* = X_{\sigma_0^{-1}}$,
and then define $X_m = (X_*)_{\sigma_m} = X_{\sigma_m\circ\sigma_0^{-1}}$ for $m=1,\dots,M$. This is illustrated on the right-hand side
of Figure~\ref{fig:BC}. If $X$ is exchangeable (that is, it is drawn from some exchangeable $Q_0$), then
the exchangeability of $X,X_1,\dots,X_M$ follows
by Theorem~\ref{thm:besag}, and this verifies that $P$ is a valid p-value, thus completing the proof of Theorem~\ref{thm:distrib_on_Sn_sample}.

Of course, we have only written out our method for the $s=1$ case
(where $s$ is the number of steps of the Markov chain). New variants
of our method can be constructed by taking $s > 1$ backward steps to
the hidden node, and the same number $s$ of forward steps to the permuted
data. All of these are valid for the same reason as the $s=1$ case.

\section{Conclusion}\label{sec:conclusion}

We proposed a new method for permutation testing that generalizes
previous methods.  This idea naturally opens up new lines of
theoretical and practical enquiry.  In this work, we have focused on
validity, but it is of course also important to examine the
consistency and power of such methods. 
In particular,
\citet{dobriban2021consistency,kim2022minimax} study
the power of the permutation test when using the full permutation group $\Scal_n$;
it would be interesting to examine this question in the context of
 using only a subset $S\subseteq\Scal_n$ or a nonuniform
distribution over $\Scal_n$.
In addition, the theoretical guarantees for all the permutation tests
considered here ensure a p-value $P$ that is valid in the sense of
satisfying $\Pp{H_0}{P\leq \alpha}\leq\alpha$, which means that $P$
could potentially be quite conservative under the null (for instance,
we saw this behavior when `fixing' the failure example in Section
\ref{sec:testing}). It would also be interesting to understand which
types of tests reduce overly conservative outcomes.

In conclusion, it is perhaps remarkable that one can still gain new understanding about classical permutation methods. In turn, 
this enhanced understanding can inform other areas of inference. As an example, the results from this paper were
 motivated by questions in conformal prediction \citep{vovk2005algorithmic}, a method for distribution-free predictive inference.
Classically, conformal prediction has relied on exchangeability of data points (e.g., training and test data are drawn i.i.d.~from
 the same unknown distribution),
 and thus the  joint distribution of the data (including both training samples and  a test point) is invariant under an arbitrary permutation.
In contrast, in our recent work \citep{barber2022conformal},
 we studied the problem of constructing prediction intervals when the data do not satisfy exchangeability; for instance,
  the distribution of observations may simply drift over time in an unknown fashion. Thus the data is no longer invariant under
  an arbitrary permutation, and so we instead restrict attention to 
  a weighted distribution over simple permutations that 
  only swap the test point with a random training point, which at least approximately preserve the distribution of the data. These swaps clearly do not form a subgroup of permutations, and are weighted non-uniformly;
 understanding how permutation tests operate in this setting, as in Theorem~\ref{thm:distrib_on_Sn}, is key to the findings in our aforementioned work.

\subsection*{Conflicts of Interest}
The authors have no conflict of interest to declare.

\subsection*{Acknowledgments}
The authors thank Nick Koning and Ilmun Kim for helpful feedback on an early preprint. The authors also thank the SQUARE program run by the American Institute of Mathematics, where our collaboration started.
 R.F.B.~was 
supported by the National Science Foundation via grants DMS-1654076 and DMS-2023109,
and by the Office of Naval Research via grant N00014-20-1-2337. 
E.J.C.~was supported by the Office of Naval Research grant N00014-20-1-2157, the National Science Foundation grant DMS-2032014, the Simons Foundation under award 814641, and the ARO grant 2003514594. R.J.T.~was supported by ONR grant N00014-20-1-2787.

\bibliographystyle{plainnat}
\bibliography{bib}

\begin{thebibliography}{14}
\providecommand{\natexlab}[1]{#1}
\providecommand{\url}[1]{\texttt{#1}}
\expandafter\ifx\csname urlstyle\endcsname\relax
  \providecommand{\doi}[1]{doi: #1}\else
  \providecommand{\doi}{doi: \begingroup \urlstyle{rm}\Url}\fi

\bibitem[Barber et~al.(2022)Barber, Cand{\`e}s, Ramdas, and
  Tibshirani]{barber2022conformal}
Rina~Foygel Barber, Emmanuel~J Cand{\`e}s, Aaditya Ramdas, and Ryan~J
  Tibshirani.
\newblock Conformal prediction beyond exchangeability.
\newblock \emph{arXiv preprint arXiv:2202.13415}, 2022.

\bibitem[Besag and Clifford(1989)]{besag1989generalized}
Julian Besag and Peter Clifford.
\newblock Generalized {M}onte {C}arlo significance tests.
\newblock \emph{Biometrika}, 76\penalty0 (4):\penalty0 633--642, 1989.

\bibitem[Dobriban(2021)]{dobriban2021consistency}
Edgar Dobriban.
\newblock Consistency of invariance-based randomization tests.
\newblock \emph{arXiv preprint arXiv:2104.12260}, 2021.

\bibitem[Harrison(2012)]{harrison2012conservative}
Matthew~T Harrison.
\newblock Conservative hypothesis tests and confidence intervals using
  importance sampling.
\newblock \emph{Biometrika}, 99\penalty0 (1):\penalty0 57--69, 2012.

\bibitem[Hemerik and Goeman(2018)]{hemerik2018exact}
Jesse Hemerik and Jelle Goeman.
\newblock Exact testing with random permutations.
\newblock \emph{Test}, 27\penalty0 (4):\penalty0 811--825, 2018.

\bibitem[Hemerik and Goeman(2021)]{hemerik2021another}
Jesse Hemerik and Jelle~J Goeman.
\newblock Another look at the lady tasting tea and differences between
  permutation tests and randomisation tests.
\newblock \emph{International Statistical Review}, 89\penalty0 (2):\penalty0
  367--381, 2021.

\bibitem[Kim et~al.(2022)Kim, Balakrishnan, and Wasserman]{kim2022minimax}
Ilmun Kim, Sivaraman Balakrishnan, and Larry Wasserman.
\newblock Minimax optimality of permutation tests.
\newblock \emph{The Annals of Statistics}, 50\penalty0 (1):\penalty0 225--251,
  2022.

\bibitem[Koning and Hemerik(2022)]{koning2022faster}
Nick~W Koning and Jesse Hemerik.
\newblock Faster exact permutation testing: Using a representative subgroup.
\newblock \emph{arXiv preprint arXiv:2202.00967}, 2022.

\bibitem[Lehmann et~al.(2005)Lehmann, Romano, and Casella]{lehmann2005testing}
Erich~Leo Lehmann, Joseph~P Romano, and George Casella.
\newblock \emph{Testing statistical hypotheses}, volume~3.
\newblock Springer, 2005.

\bibitem[Mitchell et~al.(2022)Mitchell, Cooper, Frank, and
  Holmes]{mitchell2022sampling}
Rory Mitchell, Joshua Cooper, Eibe Frank, and Geoffrey Holmes.
\newblock Sampling permutations for shapley value estimation.
\newblock 2022.

\bibitem[R{\"u}schendorf(1982)]{ruschendorf1982random}
Ludger R{\"u}schendorf.
\newblock Random variables with maximum sums.
\newblock \emph{Advances in Applied Probability}, 14\penalty0 (3):\penalty0
  623--632, 1982.

\bibitem[Southworth et~al.(2009)Southworth, Kim, and
  Owen]{southworth2009properties}
Lucinda~K Southworth, Stuart~K Kim, and Art~B Owen.
\newblock Properties of balanced permutations.
\newblock \emph{Journal of Computational Biology}, 16\penalty0 (4):\penalty0
  625--638, 2009.

\bibitem[Vovk and Wang(2020)]{vovk2020combining}
Vladimir Vovk and Ruodu Wang.
\newblock Combining p-values via averaging.
\newblock \emph{Biometrika}, 107\penalty0 (4):\penalty0 791--808, 2020.

\bibitem[Vovk et~al.(2005)Vovk, Gammerman, and Shafer]{vovk2005algorithmic}
Vladimir Vovk, Alex Gammerman, and Glenn Shafer.
\newblock \emph{Algorithmic learning in a random world}.
\newblock Springer Science \& Business Media, 2005.

\end{thebibliography}
\end{document}